
\documentclass[journal]{article}                                  




\usepackage{arxiv}
\usepackage{hyperref}
\usepackage{graphicx}          
\usepackage{amsmath} 
\usepackage{amssymb}  

\usepackage{amsthm}
\usepackage{color,soul}
\usepackage{flushend}
\usepackage{cite}
\usepackage{mfirstuc}
\usepackage{mathtools}
\usepackage{xcolor}
\usepackage{array,multirow}
\usepackage{algorithm} 
\usepackage{algorithmic}  
\usepackage[linesnumbered,algo2e,ruled,vlined,norelsize]{algorithm2e} 

\allowdisplaybreaks
\bibliographystyle{unsrt}
\newtheorem{thmm}{Theorem}

\newtheorem{remm}{Remark}
\newtheorem{assm}{Assumption}

\usepackage{tikz}
\usepackage{textcomp}
\usepackage{hyperref}
\usepackage{lipsum}

\usepackage{authblk}
\title{\LARGE \bf
Generating Sustainability-Targeting Attacks For Cyber-Physical Systems
}

\author[1]{Faysal Ahamed}
\author[1]{Tanushree Roy}
\affil[1]{Department of  Mechanical Engineering, Texas Tech University, Lubbock, TX 79409, US. Emails:~{\tt\small fahamed@ttu.edu, tanushree.roy@ttu.edu}.}

\begin{document}


\maketitle
\pagestyle{empty}

\begin{abstract}
Sustainability-targeting attacks (STA) are a growing threat to cyber-physical system (CPS)-based infrastructure, as sustainability goals become an integral part of CPS objectives. STA can be especially disruptive if it impacts the long-term sustainability cost of CPS, while its performance goals remain within acceptable parameters.
Thus, in this work, we propose a general mathematical framework for modeling such stealthy STA and derive the feasibility conditions for generating a minimum-effort maximum-impact STA on a linear CPS using a max-min formulation. A gradient ascent descent algorithm is used to construct this attack policy with an added constraint on stealthiness. An illustrative example has been simulated to demonstrate the impact of the generated attack on the sustainability cost of the CPS.
\end{abstract}

\section{INTRODUCTION}

As CPS-based critical infrastructure systems increasingly consume an enormous amount of energy, heavily interact with the environment, and focus on sustainability goals and regulatory restrictions, cyberattacks on these systems can directly impact the overall sustainability objectives. In particular, an adversary can specifically target these sustainability goals instead of disrupting system operations solely using conventional cyberattack methodologies such as denial-of-service (DoS), replay attack, false-data injection, and so on. For instance, in 2025, a cyberattack caused undetected water loss for four hours from a dam in Bremanger, Norway.  These emerging threats of sustainability-targeting attacks (STA) motivate our interest in developing a general \textit{offensive security} framework for analyzing their potential risks using an attacker-defender model.

Game-theoretic approaches have been pivotal in modeling adversarial interactions in dynamic systems. \cite{bacsar1998dynamic} developed a foundational framework for dynamic noncooperative games, establishing the basis for minmax decision-making in dynamic environments. These ideas were then extended in \cite{bacsar2008h} to $H_{\infty}$ control and minmax design problems for robust dynamic systems. In the context of cyber-physical systems, \cite{zhu2015game} introduced a game-theoretic framework that captures the interaction between cyber and physical layers in analyzing system robustness and security. However, over the past decade, research has mainly focused on how cyberattacks affect real-time CPS operation; however, attention has recently expanded to STA in systems such as intelligent buildings, smart grids, data centers, etc.
\cite{shekari2021mamiot,li2022reinforcement} showed how energy demand can be manipulated by attacking via Internet of Things (IoT) devices. In \cite{kuaban2023modelling}, the authors explored the energy depletion of IoT device batteries by using diffusion approximation and examined the impacts of ghost energy depletion attacks on device lifetime. Meanwhile, for smart grids, the worst-case power injection attacks are analyzed in \cite{lindstrom2021power} to identify the system vulnerabilities for resilient grid design optimization and reconfiguration.  Similarly, \cite{abraham2024consequence} used dynamic co-simulator models to assess how cyberattacks on smart grids can cause economic losses and threaten the sustainable operation by undermining the critical infrastructure of power systems. Researchers have also evaluated how SYN (synchronize) flood attack and network congestion attack strategies on smart electrical meters can alter or impede energy reporting, leading to poor energy management \cite{kumar2023experimental}. Additionally, \cite{islam2018ohm} explored a novel side-channel attack strategy in multi-tenant data centers to eavesdrop on power consumption data to execute attacks.

Yet, a general framework for formulating, analyzing, and generating these cyberattacks that target the sustainability goals of a CPS to achieve maximum-impact with a minimum effort on the part of an adversary has remained unexplored. Furthermore, while the impacts on the sustainability goals are evaluated across a longer time horizon, the adversary needs to remain stealthy by inducing only negligible effects on real-time system operation. To address these research gaps, the primary focus of this paper is to analyze how the control input can be corrupted to potentially impact the long-term sustainability goals of a CPS. However, the particular mechanism of corruption, such as DoS, false-data injection, etc., remains outside the scope of interest. Thus, our particular contributions in this work are:
\begin{enumerate}
    \item formulating an offensive security framework for STA for a linear CPS with sustainability objectives,
    \item deriving necessary conditions for the existence of a minimum-effort maximum-impact STA policy,
    \item implementation of a Gradient Ascent Descent  (GAD) algorithm to construct an $\alpha$-stealthy minimum-effort maximum-impact STA policy. 
\end{enumerate}

The organization of this paper is as follows: Section~\ref{prob} presents the susceptibility-targeting attack model. Section~\ref{feasibility} presents the feasibility analysis and $\alpha$-stealthy policy generation algorithm using GAD. Section~\ref{sim} presents the simulation results for a linear CPS with a quadratic cost for sustainability. We finally summarize our work in Section~\ref{conclu}.

\textbf{Notations:}
 This paper uses \(\mathbb{R}^a\) as a \(a\)-dimensional Euclidean space. The 2-norm (also called the Euclidean norm) of a vector is $\|x\| := (x^T x)^{1/2}$. 
 The scalar product of two $p$-dimensional vectors is defined as $\big(x,y\big) := \sum_{r=1}^p x_r^T y_r$. Let $I \subset \mathbb{R}$ be a compact interval. The space $C(I)$ denotes the set of continuous function on $I$, and $L^2(I)$ denotes the space of square-integrable functions 
on $I$, such that
$\int_I \|f(t)\|^2 \, dt < \infty$. 

\section{Modeling Framework for STA} \label{prob}

A security framework for CPS is divided into the administrator space and the adversarial space. The administrator space for CPS consists of the plant, a central controller, and a closed communication network connecting the plant and the controller \cite{zhu2011hierarchical}. The CPS output is measured and sent via the cloud to a central controller that implements specific control policies to obtain a control input $u$. In a CPS plant with sustainability objectives, the controller is designed to achieve both operational performance and sustainability goals. The latter is often satisfied by the administrator by minimizing a sustainability cost over a time horizon through an optimal control input $u$. In contrast, in the adversarial space, all or partial system information and measurements are used to craft cyberattacks to fulfill the goals of the adversary. Particularly, an adversary aiming to compromise the administrator’s sustainability objective through STA $\delta$ attempts to maximize the sustainability cost $\mathcal{J}$ through the minimal disruption resources needed to corrupt control input $u$. Fig.~\ref{fig:comb} shows the schematic of the different components of the CPS administrator space and the adversarial space generating the STA to corrupt control input $u$ through cloud communication.
\begin{figure}[h!]
    \centering
    \includegraphics[trim = 0mm 0mm 0mm 0mm, clip,  width=\linewidth]{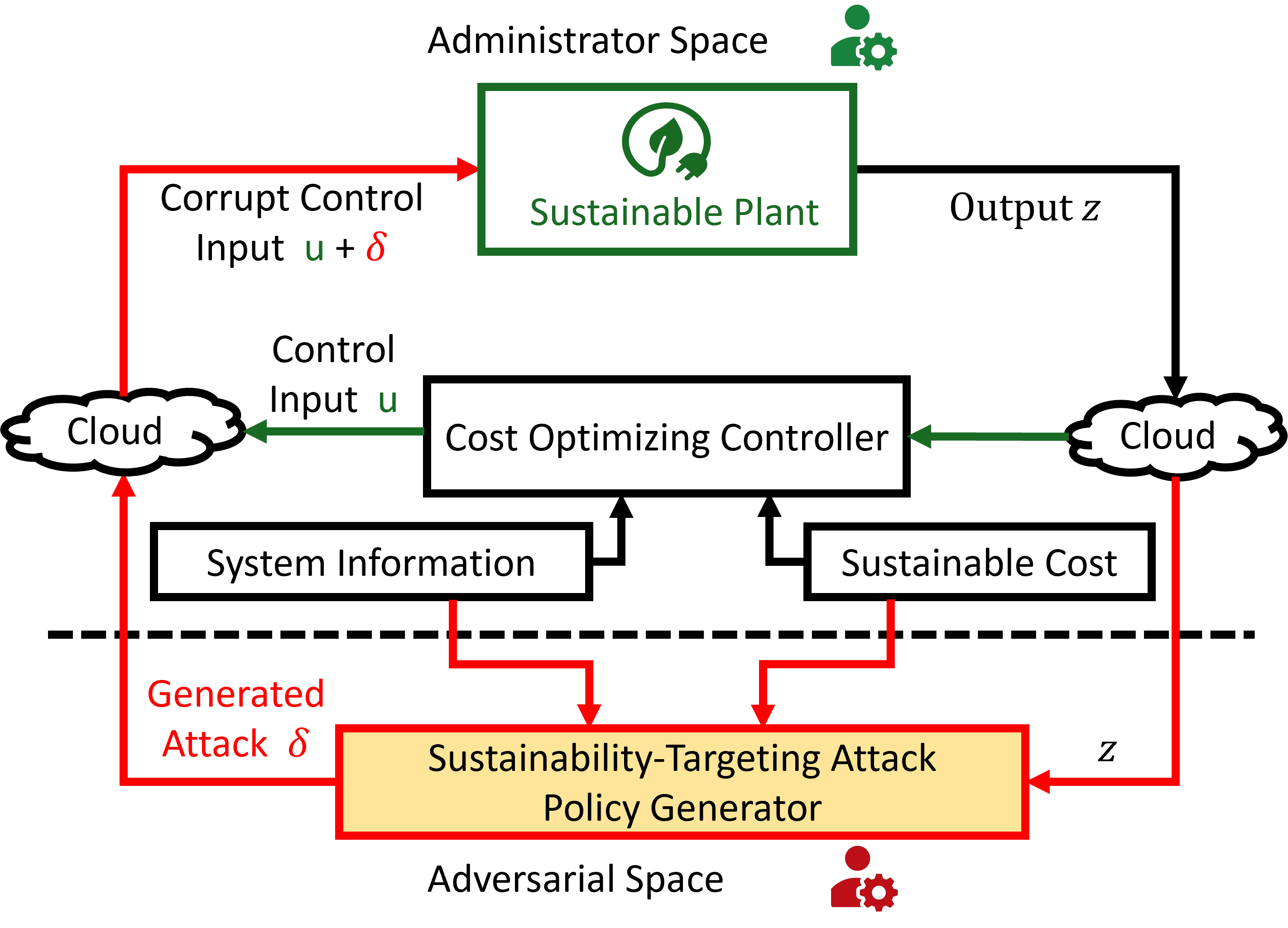}  
    \caption{A schematic of the administrative and adversarial space in the context of sustainability-targeting attacks on CPS.}
    \label{fig:comb}

\end{figure}
To present the mathematical framework for STA, we will first represent a linear CPS that is under an attack that corrupts the control input:
\begin{align}
\dot{x} & = \mathcal{A} x + \mathcal{B}u,  \label{state} \\
u & = -\mathcal{K} z + \delta, \quad z  =  \mathcal{L} x, \label{input}
\end{align}
where \(x \in \mathbb{R}^n\) is a state vector, $u \in \mathcal{N}_{u} \in \mathbb{R}^k$ is the control input, and \(z \in \mathbb{R}^j\) denotes the measured output. The attack generated by the adversary is denoted by $\delta$ such that it belongs to the set of all possible attacks $\mathcal{N}_{\delta}\subset \mathbb{R}^k$, and the control input \(u\) belongs to the admissible set $\mathcal{N}_{u}$. In addition, $\mathcal{A} \in \mathbb{R}^{n \times n}, \mathcal{B} \in \mathbb{R}^{n \times k}$, $ \mathcal{L} \in \mathbb{R}^{j \times n}$, and $\mathcal{K}\in \mathbb{R}^{k \times j}$ are the state, input, observation, and feedback gain matrices, respectively.
Now, let the (positive) sustainability cost function over the time interval $I = [0, T]$ be given by
\begin{align}
    \mathcal{S}(u, x) := \int_{I}w(t, x, u)dt + \theta(x(T)),  \label{cost1}
\end{align}
where \(w(t, x(t))\) is the running cost and \(\theta(x(t))\) is the terminal cost for the system \eqref{state}-\eqref{input}. For instance, the sustainability cost function can capture the loss in building energy utilization \cite{bakar2015energy} or the marginal emission factor for electric vehicles \cite{tu2020electric}.  The goal of the defender here is to minimize this sustainability cost \eqref{cost1} of the CPS (e.g., minimize energy loss in buildings or emissions in electric vehicles) with the least control action $u$ \eqref{input}. The defender or administrator can also minimize their control effort using the cost $ \mathcal{S}(u, x)$. On the other hand, the attacker attempts to maximize the sustainability cost \eqref{cost1} by designing $\delta$, while simultaneously minimizing their (positive) effort cost function  $\mathcal{E}(\delta) := \int_{I} E(\delta) dt$.
Thus, the defender seeks an optimal control policy $u_0$ that minimizes $\mathcal{S}$, while the attacker simultaneously attempts to design a minimum-effort maximum-impact STA policy $\delta_0$ that maximizes total impact-effort cost
\begin{align}
    \mathcal{J}(u,\delta) = \mathcal{S}(u, x)-\mathcal{E}(\delta). \label{cost}
\end{align}
The effort function can be weighted with a regularization parameter $\gamma$ to scale the importance of the adversarial resource cost in generating the attack policy.  In this offensive security framework of dynamic games, the attacker can generate the minimum-effort maximum-impact STA policy by solving the following max-min problem, subject to the dynamics of the CPS \eqref{state}-\eqref{input} 
$\max_{\delta \in \mathcal{N}_{\delta}} \min_{u \in \mathcal{N}_{u}} \mathcal{J}(u, \delta).$
While the sustainability cost function of the system is minimized with respect to the control input \(u\), we can reformulate this optimization in terms of the feedback gain \(\mathcal{K}\) by substituting the control input \eqref{input} to obtain
\begin{equation}\label{maxmin}
\max_{\delta \in \mathcal{N}_{\delta}} \min_{\mathcal{K} \in \mathcal{N}_{\mathcal{K}}} \mathcal{J}(\mathcal{K}, \delta),
\end{equation}
where the feedback gain \(\mathcal{K}\) belongs to the admissible set of the feedback gain values of $ \mathcal{N}_{\mathcal{K}}\subset \mathbb{R}^{k \times j}$. We note that the negative sign for $\mathcal{E}$ in \eqref{cost} allows adversarial effort minimization and the positive sign for $\mathcal{S}$ in \eqref{cost} allows adversarial impact maximization.

\section{Feasibility Analysis And Policy Generation} \label{feasibility}

In this section, we will derive the necessary conditions for the existence of a solution of the max-min problem \eqref{maxmin} under the constraint of the CPS dynamics \eqref{state}-\eqref{input} and prove the existence of a minimum-effort maximum-impact STA policy. This feasibility analysis will also enable us to propose an algorithm for generating the STA under stealthiness constraints using a GAD strategy.

\subsection{Feasibility of the minimum-effort maximum-impact STA}

To derive the necessary conditions for the existence of a minimum-effort maximum-impact STA for a linear CPS, we consider the following assumptions on the access to system information by the attacker and the regularity of the cost functions and the convexity of admissible sets for $u$ and $\delta$.

\begin{assm}[Attacker knowledge] \label{know}

Attackers have complete information of the CPS, i.e., knowledge of \(\mathcal{A}, \mathcal{B}, \mathcal{L}\), the sustainability cost \(\mathcal{S}\) to obtain a specific sustainability goal, and the cost of its efforts $\mathcal{E}(\delta)$. The attacker has full access to the actuator and can modify \(u\) to \(u+\delta\). 
\end{assm}
\begin{remm}
    While complete knowledge of the system can be unrealistic in a practical setting, analyzing STA policy under this assumption is critical. Several attack methods, such as stealthy attack, byzantine attack, white-box attacks, or man-in-the-middle attacks, are often analyzed under the assumption of full-system knowledge \cite{guo2018worst} and show the fundamental threats in the system design. 
\end{remm}
\begin{assm}[Continuous control input \(u\) and attack \(\delta\)] \label{input as} 
The control input and attack signal are continuous and bounded on $I=[0,T]$, i.e, $u,\delta \in C(I;\mathbb{R}^k)$. Consequently, since $I$ is finite, it follows that $u, \delta \in L^2(I, \mathbb{R}^k)$.
  \end{assm}
 \begin{assm}[Convexity of feasible attack set] \label{attack as}
         The set of all admissible attacks $\mathcal{N}_{\delta}\subset \mathbb{R}^k$ is a nonempty, closed, and convex set. This implies that for any $\epsilon \in [0,1]$ and any two attacks $\delta_1, \delta_2 \in \mathcal{N}_{\delta}$, the convex combination $\delta_{\epsilon} = \epsilon \delta_1 + (1- \epsilon) \delta_2 \in \mathcal{N}_{\delta}$.
   \end{assm}
 
\begin{assm}[Regularity of cost function and trajectory] \label{cost as}
    The terminal cost $\theta(x)$ and the running cost $w(t,x)$ are continuous and are continuously differentiable with respect to $x$, and the effort cost $\mathcal E (\delta)$ is continuously differentiable on $\mathcal N_{\delta}$. 
\end{assm}

We derive the closed-loop system of \eqref{state}-\eqref{input} by substituting \(z\) to get \(u= -\mathcal{K}\mathcal{L}x+\delta\), which changes \eqref{state} to:
\begin{align}
  \dot{x} = \mathcal{A} x - \mathcal{B}\mathcal{K}  \mathcal{L} x +  \mathcal{B} \delta.  \label{new state}
\end{align}
We will apply a variational approach to determine the minimum-effort maximum-impact STA \(\delta_0\) in the presence of an optimal control gain \(\mathcal{K}_0\). Thus, we define a Hamiltonian function using \eqref{cost} and \eqref{new state} 
\begin{align}
 \mathcal{H}(t, x, \Omega, \delta)\! :=& (\mathcal{A} x\! -\! \mathcal{B}\mathcal{K}_0  \mathcal{L} x\!+\! \mathcal{B}\delta, \Omega) \!+\! w(t, x). \label{Hamiltonian}
\end{align} 
Our goal here is to show that the necessary conditions for the existence of a minimum-effort maximum-impact STA are equivalent to maximizing the Hamiltonian $\mathcal{H}$. The costate variable \(\Omega\) in \eqref{Hamiltonian} captures the cost of violating the system dynamics during optimization. 
Inspired by \cite{ahmed2011optimal}, we prove our main theorem as presented below.

\begin{thmm}[Necessary conditions for minimum-effort maximum-impact STA]
 Let us consider the closed-loop linear CPS system \eqref{new state} with impact-effort cost \eqref{cost} such that Assumptions ~\ref{know}- ~\ref{cost as} are valid. 
 Under these assumptions, for an optimal controller gain $\mathcal{K}_0$ that minimizes the cost, if a minimum-effort maximum-impact STA $\delta_0$ exists in the sense of the max-min problem \eqref{maxmin} and produces a state trajectory $x_0$ and a costate trajectory $\Omega_0$, then the following three conditions are satisfied for $\forall t\in I$:

\begin{enumerate}
\abovedisplayskip=-\baselineskip
\belowdisplayskip=0pt
\abovedisplayshortskip=-\baselineskip
\belowdisplayshortskip=0pt
    \item \begin{flalign}\label{H_inequality} 
    & \mathcal{H}(t,x_0(t), \Omega_0(t), \delta_0(t)) \! \geqslant \!\mathcal{H}(t,x_0(t), \Omega_0(t), \delta(t)),&& 
\end{flalign}   
\item \begin{flalign}\label{xoo}
     &\dot{x}_0 = \mathcal{H}_\Omega(t,x_0, \Omega_0, \delta_0), \quad x_0(0) = \upsilon,&&
\end{flalign}  
\item \begin{flalign}\label{-H}
     & \dot{\Omega}_0 = -\mathcal{H}_x(t,x_0, \Omega_0, \delta_0),\, \Omega_0(T) = \theta_x(x_0(T)).&& 
\end{flalign}  
\end{enumerate}
\end{thmm}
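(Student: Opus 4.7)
The plan is to adapt a classical Pontryagin-type variational argument to the attacker's maximization problem with the controller gain held fixed at $\mathcal{K}_0$. The core idea is to perturb $\delta_0$ inside the convex set $\mathcal{N}_\delta$ (Assumption~\ref{attack as}), linearize the induced state trajectory using the regularity granted by Assumptions~\ref{input as} and~\ref{cost as}, and then use an integration-by-parts identity against a carefully chosen costate $\Omega_0$ to extract the three stated necessary conditions.

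Concretely, I would fix an arbitrary $\delta \in \mathcal{N}_\delta$, form $\delta_\epsilon := \delta_0 + \epsilon(\delta - \delta_0)$, which lies in $\mathcal{N}_\delta$ for $\epsilon \in [0,1]$ by convexity, and let $x_\epsilon$ denote the corresponding solution of \eqref{new state}. Linearity of the dynamics and continuity of the data yield a uniform expansion $x_\epsilon = x_0 + \epsilon\, y + o(\epsilon)$ on $I$, where the sensitivity $y$ solves
\begin{equation*}
\dot y = (\mathcal{A} - \mathcal{B}\mathcal{K}_0\mathcal{L})\, y + \mathcal{B}(\delta - \delta_0), \quad y(0) = 0.
\end{equation*}
Since $\delta_0$ maximizes $\mathcal{J}(\mathcal{K}_0, \cdot)$ over $\mathcal{N}_\delta$, the one-sided derivative at $\epsilon = 0^+$ must be non-positive, which by Assumption~\ref{cost as} unfolds into the integral variational inequality
\begin{equation*}
\int_I (w_x(t, x_0), y)\, dt + (\theta_x(x_0(T)), y(T)) - \int_I (E_\delta(\delta_0), \delta - \delta_0)\, dt \leq 0.
\end{equation*}

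Next I would eliminate $y$ by introducing the costate. From \eqref{Hamiltonian} a direct computation gives $\mathcal{H}_\Omega = \mathcal{A}x - \mathcal{B}\mathcal{K}_0\mathcal{L}x + \mathcal{B}\delta$ and $\mathcal{H}_x = (\mathcal{A} - \mathcal{B}\mathcal{K}_0\mathcal{L})^T \Omega + w_x(t,x)$, so condition \eqref{xoo} is immediate from the closed-loop equation \eqref{new state} at $\delta = \delta_0$. I would then \emph{define} $\Omega_0$ as the unique classical solution of the backward linear ODE
\begin{equation*}
\dot\Omega_0 = -(\mathcal{A} - \mathcal{B}\mathcal{K}_0\mathcal{L})^T \Omega_0 - w_x(t, x_0), \quad \Omega_0(T) = \theta_x(x_0(T)),
\end{equation*}
which is precisely \eqref{-H}; existence and uniqueness hold under Assumption~\ref{cost as}. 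Integrating $\frac{d}{dt}(\Omega_0, y)$ over $I$ with $y(0) = 0$, substituting the sensitivity and costate ODEs, and invoking the terminal condition $\Omega_0(T) = \theta_x(x_0(T))$ cancels every $y$-dependent contribution in the first-order inequality and reduces it to
\begin{equation*}
\int_I (\mathcal{B}^T \Omega_0 - E_\delta(\delta_0),\, \delta - \delta_0)\, dt \leq 0, \quad \forall\, \delta \in \mathcal{N}_\delta.
\end{equation*}

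The main obstacle will be passing from this integrated inequality to the pointwise Hamiltonian maximization condition \eqref{H_inequality}. Because Assumption~\ref{attack as} only supplies convex (not needle) perturbations, I would invoke the fundamental lemma of the calculus of variations: using continuity of $\mathcal{B}^T\Omega_0$ and $E_\delta(\delta_0)$ from Assumptions~\ref{input as} and~\ref{cost as}, testing against admissible directions supported on arbitrarily small subintervals forces the inequality to hold pointwise a.e.\ on $I$, and then everywhere by continuity. Matching this pointwise statement against \eqref{H_inequality} requires care, since \eqref{Hamiltonian} exposes only the state-coupling part of the $\delta$-dependence through $(\mathcal{B}\delta,\Omega)$; following the device in \cite{ahmed2011optimal}, I would verify that at a maximizer the first-order stationarity over the convex set $\mathcal{N}_\delta$ is equivalent to the Hamiltonian comparison in \eqref{H_inequality}, thereby closing the proof of all three conditions simultaneously.
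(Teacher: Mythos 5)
Your proposal is substantively the same variational argument as the paper's proof: convex perturbation $\delta_\epsilon=\delta_0+\epsilon(\delta-\delta_0)$ justified by Assumption~\ref{attack as}, the sensitivity equation for $y$ (identical to \eqref{variational}), elimination of $y$ via integration by parts against a costate, and a localization step to pass from the integrated inequality to the pointwise condition \eqref{H_inequality}. Two genuine differences in route are worth noting. First, you \emph{define} $\Omega_0$ up front as the solution of the backward adjoint ODE and then verify the cancellation; the paper instead obtains $\Omega_0$ abstractly, by showing $\mathcal{B}(\delta-\delta_0)\mapsto d\mathcal{S}$ is a continuous linear functional on $L^2(I)$ and invoking the Riesz representation theorem, and only afterwards derives the ODE \eqref{costates} that this representer satisfies. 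Your constructive route is more elementary and avoids the functional-analytic machinery; the paper's route has the (minor) advantage of exhibiting $\Omega_0$ as the canonical representer of the cost gradient. Second, for the localization you appeal to the fundamental lemma of the calculus of variations with convex test directions, whereas the paper uses spike (needle) variations on $(s-\omega,s+\omega)$ followed by the Lebesgue differentiation theorem; these are interchangeable here since the integrand is affine in $\delta$.

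The one place your proposal does not close is the step you yourself flag at the end. Your first-order inequality correctly retains the effort term, yielding $\int_I(\mathcal{B}^T\Omega_0-E_\delta(\delta_0),\delta-\delta_0)\,dt\leqslant 0$, but the Hamiltonian \eqref{Hamiltonian} contains no $-E(\delta)$ term, so \eqref{H_inequality} is equivalent to the pointwise statement $(\mathcal{B}(\delta-\delta_0),\Omega_0)\leqslant 0$ \emph{without} the $E_\delta$ correction; these are not the same condition, and "verifying the equivalence" is precisely what needs an argument. The paper bridges this gap by splitting $d\mathcal{J}=d\mathcal{S}-d\mathcal{E}$, asserting $d\mathcal{E}\geqslant 0$ because $\delta_0$ is "minimum-effort," and then working only with $d\mathcal{S}\leqslant 0$ to arrive at \eqref{gd}. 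You should either adopt that decomposition explicitly (noting that it requires the additional premise $E(\delta_0)\leqslant E(\delta_\epsilon)$, i.e., that $\delta_0$ separately minimizes effort among admissible perturbations, which does not follow from maximality of $\mathcal{J}$ alone) or redefine the Hamiltonian to include $-E(\delta)$, in which case your inequality gives \eqref{H_inequality} directly. As written, the final reduction is deferred rather than proved.
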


\begin{proof}
Let us consider an optimal feedback control gain $\mathcal{K}_0\in\mathcal{N}_{\mathcal{K}}$ and the corresponding impact-effort cost
\begin{align}
    \mathcal{J}(\mathcal{K}_0, \delta_0) := \!\int_{I} (w(t, x_0(t))-E(\delta_0))dt+ \theta(x_0(T)), \label{Costfunction}
\end{align}
where the state $x_0(t):=x(t,\mathcal{K}_0, \delta_0)$ is the solution of 
\begin{align}
  \dot{x}_0 = \mathcal{A} x_0 - \mathcal{B}\mathcal{K}_0  \mathcal{L} x_0 +  \mathcal{B}\delta_0,\ x_0(0) = \upsilon,\, t \in I. \label{state12}
\end{align}
Now, we suppose that $\delta_0 \in \mathcal{N}_{\delta}$ is the minimum-effort maximum-impact attack for the optimal gain $\mathcal{K}_0$ and $\delta \in \mathcal{N}_{\delta}$ is any other attack. Since \(\mathcal{N}_{\delta}\) is a closed convex set, for any $\epsilon \in (0,1)$ the combination of $\delta$ and $\delta_0$ such as $\delta_{\epsilon} := \delta_0 + \epsilon (\delta-\delta_0)$ is also in the admissible set $\mathcal{N}_{\delta}$. This perturbed attack $\delta_{\epsilon}$ helps us examine how the impact-effort cost changes when we change the STA  $\delta_0$ subtly.
 Now, the solution of \eqref{state12} corresponding to the perturbed attack \(\delta_\epsilon\) is denoted as \(x_{\epsilon}\). Since \(\delta_0\) is the maximum-impact minimum effort attack, we have \(\mathcal{J}(\mathcal{K}_0, \delta_0) \geqslant \mathcal{J}(\mathcal{K}_0, \delta_\epsilon)\), \(\forall \epsilon \in (0,1)\). Next, dividing the inequality by $\epsilon$ and using \eqref{cost} we obtain: 
    $\frac{1}{\epsilon} \left( \mathcal{J}(\mathcal{K}_0, \delta_\epsilon) - \mathcal{J}(\mathcal{K}_0, \delta_0) \right) 
    = \frac{1}{\epsilon} [ \int_{I} (w(t, x_\epsilon(t)) - w(t, x_0(t))) \, dt -\int_{I}(E(\delta_{\epsilon})-E(\delta_0))dt + \theta(x_\epsilon(T)) - \theta(x_0(T))]$ $\leqslant 0.$ 
As $\epsilon \to 0$, this yields the Gateaux derivative $d\mathcal{J}(\mathcal{K}_0, \delta_0)$ of the impact effort cost $\mathcal{J}$ at $\delta_0$, which implies that $d\mathcal{J}(\mathcal{K}_0, \delta_0)\leqslant 0$. Additionally, \eqref{cost} shows that this Gateaux derivative can be written in terms of two derivatives: $d\mathcal{J}(\mathcal{K}_0, \delta_0)=d\mathcal{S}(\mathcal{K}_0, \delta_0)-d\mathcal{\mathcal{E}}(\mathcal{K}_0, \delta_0)$, where
\begin{align}\nonumber
    d\mathcal{S}(\mathcal{K}_0, \delta_0)=& \lim_{\epsilon \rightarrow 0} \frac{1}{\epsilon} \Big[ \int_{I} (w(t, x_\epsilon(t)) - w(t, x_0(t))) \, dt\\ 
    &\hspace{2mm}+ \theta(x_\epsilon(T)) - \theta(x_0(T))\Big], \label{cals}\\
   d\mathcal{\mathcal{E}}(\mathcal{K}_0, \delta_0) = &\lim_{\epsilon \rightarrow 0}  \frac{1}{\epsilon}\int_{I}(E(\delta_{\epsilon})-E(\delta_0))dt.
\end{align}
Since $\delta_0$ is the minimum-effort attack, $E(\delta_0)\leqslant E(\delta_\epsilon)$, i.e. $ d\mathcal{\mathcal{E}}(\mathcal{K}_0, \delta_0)\geqslant 0$. Therefore, a sufficient condition for satisfying $d\mathcal{J}(\mathcal{K}_0, \delta_0)$ is $d\mathcal{S}(\mathcal{K}_0, \delta_0)\leqslant 0$. Now, \eqref{cals} yields
\begin{align}
&d\mathcal{S}(\mathcal{K}_0, \delta_0) = \mathbb{K}_y[\mathcal{S}_x] \leqslant 0, \label{Gateaux}\\
&\text{where operator } \mathbb{K}_y[\mathcal{S}_x]:= (\mathcal{S}_x,y),\label{KJ}
\end{align}
  and function $y(t) := \lim_{\epsilon \to 0} \tfrac{x_\epsilon(t) -  x_0(t)}{\epsilon}$.
Using Assumption \ref{cost as}, $x, x_{\epsilon} \in C(I)$ and in turn $y(t) \in C(I)$ is the solution of 
\begin{align}
\dot{y} = \mathcal{A} y - (\mathcal{B} \mathcal{K}_0  \mathcal{L} ) y + \mathcal{B}(\delta- \delta_0), \quad y(0) = 0. \label{variational}
\end{align}
This variational equation \eqref{variational} shows how the perturbation $y(t)$ in the state evolves with a small variation in the attack $\delta_0$. Since the initial conditions $x_0(0)=x_\epsilon(0)=\nu$ are fixed with respect to variations in the attack term, the state variation is $y(0)=0$. By Assumption~\ref{input as}, attacks $\delta,\delta_0 \in L^2(I;\mathbb{R}^k)$ are bounded functions; thus the driving term in \eqref{variational}, $\mathcal{B}(\delta  - \delta_0)\in L^2(I)$.
This guarantees the existence and uniqueness of the solution $y \in C(I)$.  
Additionally, this implies that there exists a continuous map
   $\mathcal{B} (\delta  - \delta_0) \to y. $
  Next, from Assumption~\ref{cost as}, the terms $\int_I \big(w_x(t,  x_0(t) \big)dt$ and $\theta_x \big( x_0(T)\big)$ are continuously differentiable with respect to $x$. Therefore, the map $y \to \mathbb{K}_y[\mathcal{S}_x]$ is also continuous.
So, the composition of the two maps also yields a continuous linear mapping 
\begin{align}
\mathcal{B}(\delta  - \delta_0) \to \mathbb{K}_y[\mathcal{S}_x]. \label{newstate}
\end{align}
Thus, \eqref{Gateaux} and \eqref{newstate} imply that the Gateaux derivative $d\mathcal{S}$ is a continuous linear functional of $\mathcal{B}(\delta  - \delta_0)\in L^2(I)$. Therefore, we can now apply the Riesz representation theorem \cite{rudin1991functional} that 
guarantees the existence of a unique function $\Omega_0 \in L^2(I;\mathbb{R}^n)$ such that the Gateaux derivative of the cost function can be expressed as an inner product with $\Omega_0$. Additionally, using \eqref{Gateaux} we can infer $\forall \delta \in \mathcal{N}_{\delta}$
\begin{align}
 d \mathcal{S} 
= \int_I  \Big(\mathcal{B}(\delta  - \delta_0), \Omega_0 \Big) \, dt\leqslant 0 
\label{gd}
\end{align}
This $\Omega_0$ can in fact be defined as our costate from the Hamiltonian \eqref{Hamiltonian}. Consequently, observing \eqref{Hamiltonian}, we can add more terms to the inequality $\int_I \big(\mathcal{B} \delta , \Omega_0 \big) dt  \leqslant  \int_I \big(\mathcal{B} \delta_0, \Omega_0 \big) dt$  to obtain $\int_I [\mathcal{H}( t,x_0, \Omega_0, \delta) - \mathcal{H}(t,x_0, \Omega_0,\delta_0)] dt \leqslant 0.$ Since this inequality is true $\forall \delta\in \mathcal{N}_{\delta}$, for some $s\in I$ we define $\delta(t)= \delta_0(t), \forall t\in I\backslash (s-\omega,s+\omega)$ and $\delta(t)=\mathcal{D}\in\mathcal{N}_{\delta}$ otherwise. This spike variation choice of $\delta$, reduces the integral to $\int_{s-\omega}^{s+\omega}[\mathcal{H}( t,x_0, \Omega_0, \mathcal{D}) - \mathcal{H}(t,x_0, \Omega_0,\delta_0)] dt \leqslant 0$ \cite{ahmed2006dynamic}. We note here that $\mathcal{H}$ is an $L^2(I)$ on a measurable set $(s-\omega,s+\omega)$. Hence, we can use the Lebesgue differential theorem \cite{weiss2015course} by dividing both side by $\omega$ and set $\omega\to 0$, to obtain the point-wise inequality   $\mathcal{H}( t,x_0, \Omega_0, \delta) - \mathcal{H}(t,x_0, \Omega_0,\delta_0)\leqslant 0,$ for almost all $ t\in I$ and any $\delta\in\mathcal{N}_{\delta}$, proving condition \eqref{H_inequality}.

Next, to prove the condition \eqref{xoo},  we differentiate the Hamiltonian \eqref{Hamiltonian} with respect to the costate variable \(\Omega\). This produces the right-hand side of the system dynamics expressed in \eqref{state12}, and we recover condition \eqref{xoo}

Subsequently, to prove the condition \eqref{-H}, we will first derive the costate dynamics. To achieve this, we will rearrange the terms in \eqref{variational} to isolate $\mathcal{B}(\delta- \delta_0)$ and substitute into \eqref{gd} 
\begin{align}
&d\mathcal{S}= \int_I  \Big( \dot{y} - [\mathcal{A} y - (\mathcal{B} \mathcal{K}_0 \mathcal{L}) y], \Omega_0 \Big) \, dt. \label{C12}
\end{align}
We apply integration by parts to the integral in \eqref{C12} with $y(0)=0$ to obtain
$\int_I  \big( \dot{y} - [ \mathcal{A} y - (\mathcal{B} \mathcal{K}_0 \mathcal{L}) y], \Omega_0 \big)  \, dt
= \big(y(T), \Omega_0(T))  - \int_I  \big( y, \dot{ \Omega}_0 + \mathcal{A}^T\Omega_0 - (\mathcal{B}\mathcal{K}_0\mathcal{L})^T \Omega_0 \big)  \, dt.$
Replacing this in \eqref{C12}, then using \eqref{Gateaux}-\eqref{KJ} and comparing, we obtain the costate dynamics:
\begin{align}
\dot{ \Omega}_0   =  -\left[\mathcal{A}^T\Omega_0 -(\mathcal{B}\mathcal{K}_0\mathcal{L})^T \Omega_0+ w_x (t,  x_0(t))\right] , \label{costates}
\end{align}
with the terminal condition 
   $ \Omega_0(T) = \theta_x ( x_0(T)).$ 
Now, if we differentiate the Hamiltonian \eqref{Hamiltonian} with respect to the state variable $x$,  we obtain the right-hand side of \eqref{costates}. This yields the third condition specified in \eqref{-H} and completes the proof.
\end{proof}
\begin{thmm}[Existence of minimum-effort maximum-impact STA]
    Consider the system \eqref{new state} with the sustainability impact-effort cost \eqref{cost}, a functional of the attack $\delta\in \mathcal{N}_{\delta}$ and for any optimal control law $\mathcal{K}_0 \in \mathcal{N}_{\mathcal{K}}$. If Assumptions~\ref{input as}-\ref{cost as} hold, then there exists at least one minimum-effort maximum-impact STA $\delta_0\in \mathcal{N}_{\delta}$ that achieves a maximum value of an impact-effort cost. 
\end{thmm}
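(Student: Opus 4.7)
The plan is to apply the direct method of the calculus of variations to $\delta \mapsto \mathcal{J}(\mathcal{K}_0,\delta)$ on the admissible set $\mathcal{N}_{\delta}$: first extract a weakly convergent maximizing sequence in $L^2(I;\mathbb{R}^k)$, then confirm the weak limit remains feasible, and finally verify weak upper semicontinuity of $\mathcal{J}(\mathcal{K}_0,\cdot)$ so that this weak limit achieves the supremum. This Weierstrass-type route is natural because the preceding theorem already establishes the necessary conditions and regularity needed to set up the correct function spaces.

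First I would set up the function-space framework. By Assumption~\ref{input as}, every admissible $\delta$ lies in the reflexive Hilbert space $L^2(I;\mathbb{R}^k)$, and by Assumption~\ref{attack as}, $\mathcal{N}_{\delta}$ is closed and convex, hence weakly closed by Mazur's theorem. Fix $\mathcal{K}_0\in\mathcal{N}_{\mathcal{K}}$ and take $\{\delta_n\}\subset\mathcal{N}_{\delta}$ with $\mathcal{J}(\mathcal{K}_0,\delta_n)\to M := \sup_{\delta\in\mathcal{N}_{\delta}}\mathcal{J}(\mathcal{K}_0,\delta)$. Under the standard resource-limited setting in which either $\mathcal{N}_{\delta}$ is bounded in $L^2$ or $\mathcal{E}$ is coercive enough to dominate the growth of $\mathcal{S}$ in $\delta$, the maximizing sequence $\{\delta_n\}$ is bounded. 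Banach-Alaoglu then yields a subsequence (not relabelled) with $\delta_n\rightharpoonup \delta_0$, and weak closedness of $\mathcal{N}_{\delta}$ gives $\delta_0\in\mathcal{N}_{\delta}$.

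Next I would establish weak upper semicontinuity of $\mathcal{J}(\mathcal{K}_0,\cdot)$. The closed-loop dynamics \eqref{new state} with $\mathcal{K}_0$ fixed define, via variation of constants, a bounded linear solution map $T:\delta\mapsto x$ from $L^2(I;\mathbb{R}^k)$ into $H^1(I;\mathbb{R}^n)$; since $H^1(I)\hookrightarrow C(I)$ compactly on the compact interval $I$, the operator $T$ is compact, so $\delta_n\rightharpoonup\delta_0$ implies $x_n\to x_0$ strongly in $C(I;\mathbb{R}^n)$. By Assumption~\ref{cost as}, $w(t,\cdot)$ and $\theta$ are continuous, so dominated convergence yields $\mathcal{S}(u_n,x_n)\to \mathcal{S}(u_0,x_0)$. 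Provided $\mathcal{E}$ is convex on $\mathcal{N}_{\delta}$, it is weakly lower semicontinuous on $L^2(I)$, so $\limsup_n [-\mathcal{E}(\delta_n)]\leqslant -\mathcal{E}(\delta_0)$. Combining,
\begin{equation*}
M = \lim_n \mathcal{J}(\mathcal{K}_0,\delta_n) = \lim_n \mathcal{S}(u_n,x_n) + \limsup_n[-\mathcal{E}(\delta_n)] \leqslant \mathcal{J}(\mathcal{K}_0,\delta_0),
\end{equation*}
so the supremum is attained at $\delta_0\in\mathcal{N}_{\delta}$, proving existence.

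The main obstacle is this last step, because Assumption~\ref{cost as} only provides $C^1$-regularity of $E$: weak upper semicontinuity of $-\mathcal{E}$ additionally requires convexity of $\mathcal{E}$ (or, more generally, weak sequential continuity on $\mathcal{N}_{\delta}$). I would close this gap by invoking the convexity that a meaningful effort functional naturally possesses in this setting (for instance the quadratic penalties used in the illustrative example), or by restricting to a bounded subset of $\mathcal{N}_{\delta}$ on which the compactness of $T$ and strong $C(I)$-convergence of $x_n$ let the limit pass through after a further subsequence extraction. A subsidiary delicacy is the coercivity/boundedness condition used to bound the maximizing sequence; this should either be read off from the problem data or declared as an explicit standing hypothesis in the statement.
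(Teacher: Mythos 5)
Your proposal follows the same Weierstrass-type route as the paper: both arguments work with a weak(-star) convergent sequence in $L^2(I;\mathbb{R}^k)$ drawn from the closed, convex admissible set $\mathcal{N}_{\delta}$, exploit the compactifying effect of the linear closed-loop dynamics \eqref{new state} to upgrade weak convergence of the attacks to strong convergence of the states --- the paper does this by hand via the Gronwall estimate \eqref{thm2_eq} followed by dominated convergence, while you package it as a compact solution operator through $H^1(I)\hookrightarrow C(I)$ --- and then pass to the limit in the cost. Where you genuinely diverge is in the treatment of the effort term. The paper asserts sequential weak-star \emph{continuity} of the full functional $\delta\mapsto\mathcal{J}(\mathcal{K},\delta)$, but its limit argument only handles the state-dependent part $\mathcal{S}$; the term $-\mathcal{E}(\delta)=-\int_I E(\delta)\,dt$ is not weak-star continuous in general (for the quadratic $E$ used in the simulation section, an oscillating sequence with $\delta_i\rightharpoonup 0$ has $\mathcal{E}(\delta_i)\not\to 0$). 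Your observation that only weak \emph{upper} semicontinuity of $\mathcal{J}$ is required, and that this in turn needs convexity (hence weak lower semicontinuity) of $\mathcal{E}$ --- a property not supplied by the $C^1$ regularity in Assumption~\ref{cost as} --- identifies and repairs a real gap that the paper's proof glosses over. Similarly, your explicit coercivity/boundedness hypothesis for the maximizing sequence makes precise what the paper implicitly assumes when it invokes Alaoglu's theorem for $\mathcal{N}_{\delta}$. In short, your route buys rigor at the cost of one additional but natural hypothesis on $\mathcal{E}$; the paper's version is shorter but, as written, incomplete on exactly the points you flagged.
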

\begin{proof}
   Using Assumption~\ref{input as} and invoking Alaoglu's Theorem \cite{rudin1991functional}, we can infer that $\mathcal{N}_{\delta}$ is a weak-star ($w^*$) compact set. This implies that to prove the existence of the maximum of the cost \eqref{cost}, it is sufficient to prove that $\delta \rightarrow J(\mathcal{K},\delta)$ is sequentially weak-star continuous. 

   Since $\mathcal{N}_{\delta}$ is a weak-star compact set,  let us consider a sequence of $\delta_i\in \mathcal{N}_{\delta}, \forall i\in \mathbb{N}$ such that $\delta_i\xrightarrow[]{w^*}\delta_0 \in\mathcal{N}_{\delta}$ (by compactness of $\mathcal{N}_{\delta}$). Let $x_i$ and $x_0$ denote the solutions to \eqref{new state} for $\delta_i$ and $\delta_0$, respectively. Then, using \eqref{new state}, Assumptions~\ref{input as}-\ref{cost as}, and Gronwall inequality, we can readily derive 
    \begin{align}\label{thm2_eq}
        \|x_0(t)-x_i(t)\|\leqslant \|F_i(t)\| + q\int_0^t\!\!\! \|F_i(s)\|\,d s,
    \end{align}
    where some constant $q$ dependent on $A,B,\mathcal{K}_0\mathcal{L}$ and 
        $F_i(t) =\int_0^t B(\delta_0(s)-\delta_i(s))\, ds, \forall t\in I.$ 
Now for any $\phi\in \mathbb{R}^n$, we can write $(F_i(t),\phi) \rightarrow 0$ as $i\rightarrow \infty, \forall t$, since $\delta_i\xrightarrow[]{w^*}\delta_0 $. However, by Assumptions~\ref{input as}-\ref{attack as}, $F_i(t)$ is bounded and is in an finite-dimensional space $\mathbb{R}^n$. Hence, weak and strong convergence are the same, i.e. $\|F_i(s)\|\rightarrow 0$ as $i\rightarrow \infty, \forall t$. Next, by the Lebesgue dominated convergence theorem, the second term in \eqref{thm2_eq} also converges to $0$, which in turn implies $\lim_{i\rightarrow \infty}x_i(t)=x_0(t), \forall t\in I$. Lastly, using the continuity Assumption~\ref{cost as}, we prove the $w^*$ continuity of $\lim_{i\rightarrow \infty} J(\mathcal{K},\delta_i)=J(\mathcal{K},\delta_0)$ on the set $\mathcal{N}_{\delta}$. 
\end{proof}

With the derivation of the necessary conditions for the existence of the minimum-effort maximum-impact STA $\delta_0$ in Theorems 1 and 2, we will present an algorithmic approach to construct it under real-time stealthy criteria.

\subsection{Gradient ascent-descent (GAD) algorithm for stealthy STA generation}

In this subsection, we describe a GAD algorithm \cite{wang2023optimal} that is used to obtain the minimum-effort maximum-impact  STA policy for our linear dynamical system \eqref{state}-\eqref{input} under an $\alpha$-stealthy constraint. Our goal is to \textit{algorithmically obtain a STA that will maximize its impact on the sustainability goals of the CPS in the long-term with minimum effort, while remaining $\alpha$-stealthy in real-time}.  
 
In GAD, the gradient descent method is followed by the gradient ascent to calculate the optimal feedback gain \(\mathcal{K}_0\) and the minimum-effort maximum-impact  STA \(\delta_0\), respectively. 
During the gradient descent step, we update the feedback control gain $\mathcal{K}$ by using the gradient descent equation 
\begin{equation}
\mathcal{K}^{l+1} =\mathcal{K}^{l} - \lambda_{\mathcal{K}} \nabla_{\mathcal{K}}\mathcal{H}, \label{gradient_K}
\end{equation}
to find the optimal value of feedback control gain \(\mathcal{K}\) for which the cost function is minimum. Similarly, in the gradient ascent step, the value of the attack \(\delta\) is updated progressively to find the desired minimum-effort maximum-impact  STA by maximizing the impact-effort cost function using gradient-ascent step
\begin{equation}
    \delta^{l+1}=\delta^{l} + \lambda_{\delta} \nabla_{\delta}\mathcal{H}.  \label{gradient d}
\end{equation}
Here, $\lambda_{\mathcal{K}}$, $\lambda_{\delta}>0$ are respectively the learning rates for the ascent and descent parts. $\mathcal{K}^l$ and $\delta^l$ are the gain and attack in the $l^{th}$ iteration. Now if $|\mathcal{J}^{l+1} - \mathcal{J}^{l}| < \eta$, the GAD algorithm is stopped and  $\mathcal{K}^{l+1}$  is chosen as the the optimal gain $\mathcal{K}_0$ and $\delta^{l+1}$ provides an optimal STA $\tilde{\delta}_0$. This attack choice is then designed to ensure $\alpha$-stealthiness. 

For detection, a defender often utilizes a residual generator that monitors the real-time operation of the CPS by measuring the system output $z$, identifying any deviation from the desired system output $z_{ref}$. The residual $r=\|z_{ref}-z\|_2$ is then compared with a pre-defined threshold $\alpha$ \cite{Ding,Ghosh2023SecuritySwitching}. If $r=\|z_{ref}-z\|_2\geqslant \alpha$ then a decision of attack is made, and a no-attack decision otherwise. Thus, any attack $\delta$ is considered to be $\alpha$-stealthy if it evades detection by generating an output $z(\mathcal{K},\delta)$ such that $r=\|z_{ref}-z(\mathcal{K},\delta)\|_2\leqslant \alpha$ \cite{teixeira2012attack}. 

To enforce stealthiness, 
the final stealthy STA $\delta_0$ is obtained through an optimal scaling 
\begin{align}
    \delta_0(t)= \mu_0(t) \tilde\delta_0(t), \forall t \in I,   \label{muo}
\end{align}
where $\mu^*(t)\in (0,1]$ is selected as the pointwise maximum admissible scaling factor using
\begin{align}\label{mu_star}
    \!\!\!\mu_0(t)\!:=\!\max\!\left\{\mu \in (0,1]:\!\|z_{ref}\!-\!z(\mathcal{K}_0,\mu{\tilde\delta_0(t)})\|_2\leqslant \alpha\right\}, \forall t.
\end{align}
The algorithm begins with the initial states of the system and initial guesses for the feedback gain and attack from their respective feasible sets.
Now, the adjoint state $\Omega$ is computed by solving the costate equation \eqref{-H} over the whole time window. Then we will evaluate $\nabla_{\mathcal{K}}\mathcal{H}$ using \eqref{Hamiltonian}  and obtain the updated feedback gain from \eqref{gradient_K}. With this updated control gain and costate variable, we compute  $\nabla_{\delta}\mathcal{H}^l $  from \eqref{Hamiltonian} and obtain the updated STA policy  $\tilde\delta_0(t)$ from \eqref{gradient d}. Then $\tilde \delta_0(t)$ is post-processed via a scaling operation using \eqref{muo}-\eqref{mu_star} to ensure that the final STA becomes $\alpha$-stealthy. 
The stopping criterion for this algorithm is set until the change in cost function \eqref{Costfunction} between successive iterations is less than a set tolerance $\eta$. 
The details of this GAD-based STA generation method have been presented in Algorithm~\ref{alg:desicion}.

\begin{algorithm2e}[h!]

\caption{Gradient Ascent Descent Algorithm}\label{alg:desicion}
\SetKwFunction{FR}{Init{\_}dynamics}
\SetKwFunction{FIC}{Cost{\_}Fn}
\SetKwFunction{FCO}{Costate{\_}Est}
\SetKwFunction{FGK}{Grad{\_}$\mathcal{K}${\_}Est}
\SetKwFunction{FF}{Grad{\_}Desc}
\SetKwFunction{FGD}{Grad{\_}$\delta${\_}Est}
\SetKwFunction{FQ}{Grad{\_}Asc}
\SetKwFunction{FD}{Comp{\_}dynamics}
\SetKwFunction{FM}{Comp{\_}Cost}
\SetKwFunction{FN}{cost{\_}Fn{\_}check}
\KwIn{Number of GAD iterations $N$, initial guesses for gain $\mathcal{K}^1$ and attack $\delta^1$, learning rates  $\lambda_{\mathcal{K}},  \lambda_{\delta}$, tolerance $\eta$.}

\KwOut{Optimal gain $\mathcal{K}_0$, desired STA $\delta_0$}
    $x^{1}, \mathcal{J}^{1} \, \leftarrow$  \texttt{Cost{\_}Fn($\mathcal{K}^{1}, \delta^{1}$)}\\
\For{$l=1:N$}{
    $\Omega^l,  \mathcal{K}^{l+1}   \leftarrow$\texttt{ Grad{\_}Desc($ x^{l}, \mathcal{K}^l, \delta^l$)}
    \\
    
     $\delta^{l+1}   \leftarrow$ \texttt{Grad{\_}Asc($x^{l},  \mathcal{K}^{l+1}, \delta^l, \Omega^l$)}\\
    
  $x^{l+1}, \mathcal{J}^{l+1}\,\leftarrow$  \texttt{Cost{\_}Fn($\mathcal{K}^{l+1}, \delta^{l+1}$)}\\

    \If{$|\mathcal{J}^{l+1} - \mathcal{J}^{l}| < \eta$}{  
    
    \texttt{ $\mathcal{K}_0 \leftarrow \mathcal{K}^{l+1},\delta_0 \leftarrow \mu_0\delta^{l+1}$ } from \eqref{muo}-\eqref{mu_star}\\
    \textbf{break}}
}
     \SetKwProg{Fn}{function}{:}{\KwRet}
     \Fn{\FIC{$\mathcal{K}^l, \delta^l$}}{
         Compute $x^{l}, \forall t$ from \eqref{state}; $\mathcal{J}^{l}$ from \eqref{Costfunction}.\\
        \KwRet $x^{l}, \mathcal{J}^{l}$ ;
         }
       
           \SetKwProg{Fn}{function}{:}{\KwRet}
    \Fn{\FF{$x^{l}, \mathcal{K}^{l}, \delta^{1}$}}{
       Compute $\Omega^{l}, \forall t$ from \eqref{-H}; $\nabla_{\mathcal{K}}\mathcal{H}^l$ from \eqref{Hamiltonian}; $\mathcal{K}^{l+1}$  from \eqref{gradient_K}.\\
        \KwRet $\Omega^{l}, \mathcal{K}^{l+1}$ \;
        }
        
        \SetKwProg{Fn}{function}{:}{\KwRet}
    \Fn{\FQ{$x^{l},  \mathcal{K}^{l+1}, \delta^l, \Omega^{l}$}}{
     Compute $\nabla_{\delta}\mathcal{H}^l $  from \eqref{Hamiltonian};  $\delta^{l+1}$ from \eqref{gradient d} \\
        \KwRet $\delta^{l+1} $;
        }

\end{algorithm2e}

\section{SIMULATION RESULTS}\label{sim}

We will now present a simulation case study to show the efficacy of Algorithm~\ref{alg:desicion} in generating the minimum-effort maximum-impact $\alpha$-stealthy STA for a linear CPS. We also measure the impact of this desired generated attack $\delta_0$ on the sustainability cost of the system compared to a no-attack scenario \((\delta=0)\) with optimal sustainability performance. 
In this work, the linear system \eqref{state}-\eqref{input} has been chosen as $\mathcal{A}=[1\ 2;1\ 2]$, $\mathcal{B}=[2\ 1]^T$. This is a full-state feedback system, i.e., $y=x$, and our goal here is to reach the zero steady-state.
To capture the impact-effort costs for this illustrative example, we have chosen an illustrative quadratic sustainability impact-effort cost:
\begin{equation}
    \mathcal{J}(u, \delta) = \int_0^T \big(\|x(t)\|^2 + \|u(t)\|^2 - \gamma \|\delta(t)\|^2 \ \big) dt. \label{cost_function}
\end{equation}
where the effort cost function $\mathcal{E}(\delta)= \int_0^T  \gamma \|\delta(t)\|^2 dt$ and effort regularization parameter $\gamma=1.$
 The time horizon is chosen to be $T=100s.$ The corresponding Hamiltonian for \eqref{cost_function} can be defined as
    $\mathcal{H}=  \Omega^T [(A-BK) x(t) + B \delta(t)] + \|x(t)\|^2 
    +  \|u(t)\|^2 - \gamma\|\delta(t)\|^2  $.
The goal of the adversary here is to remain $\alpha$-stealthy where $\alpha=0.003$ i.e. $\|x(\mathcal{K}_0, \delta_0)\|_2\leqslant 0.003.$
 
 \begin{figure}[H]
    \centering
    \includegraphics[trim=0 2mm 8mm 0, clip, width=.6\linewidth]{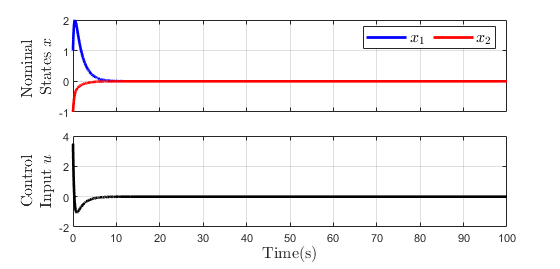}
    \caption{System under nominal conditions: top plot shows controlled system trajectories and bottom plot shows the LQR-based control input.}
    \label{fig:case1}
\end{figure}

For this system, Figure~\ref{fig:case1} shows the performance for a nominal situation, under the Linear Quadratic Regulator (LQR) control gain \cite{zhao2023optimal} $\mathcal{K}=[1.26 \ 4.76]$. The top plot of Figure~\ref{fig:case1} shows that the state trajectories are stabilized under no attack, and the total cost $\mathcal{J} = 8.47$ is minimized. The LQR control input $u(t)$ also remains low throughout the time horizon and reaches $u=0$, as shown in the bottom plot of Figure~\ref{fig:case1}. This confirms that the administrator achieves improved sustainability performance.
\begin{figure}[h!]
    \centering
    \includegraphics[trim=0 1mm 8mm 0, clip, width=.6\linewidth]{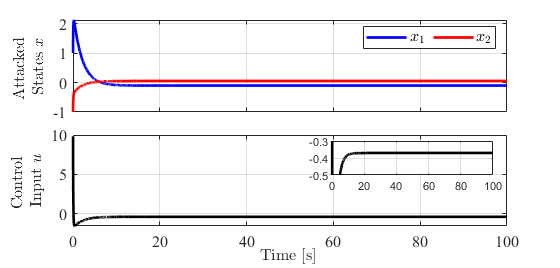}
    \caption{System performance under attack: top plot shows deviations from the reference trajectories and bottom plot shows nominal control input.}
    \label{fig:case2}
\end{figure}

Next, we analyze how this system behaves when the defender tries to minimize the sustainability cost  $\mathcal{S}$ while the attacker simultaneously tries to maximize impact-effort cost, while remaining $\alpha$-stealthy. Using Algorithm~\ref{alg:desicion}, the attacker generates the minimum-effort maximum-impact  STA $\delta_0=0.37$.  On the other hand, the administrator sets the optimal feedback gain as $\mathcal{K}_0=[2.74 \ 12.55]$ to achieve the best-case control under attack, and the corresponding control input reaches $-0.37$, vide inset of the bottom plot of Figure~\ref{fig:case2}. Under the minimum-effort maximum-impact STA, the system trajectories thus settle to $x=[-0.10 \ 0.05]^T$, as shown in the top plot of Figure~\ref{fig:case2}. These trajectories satisfy the $\alpha$-stealthy criteria. The sustainability cost under a minimum-effort maximum impact STA policy is given by $\mathcal{S}= 23.81$, which is a 181\% increase compared to the nominal sustainability cost.

\section{CONCLUSIONS}\label{conclu}

In this work, we proposed an offensive security framework for modeling a cyberattack that can compromise the long-term sustainability goals of an administrator using a min-max optimization. We have derived the necessary conditions for generating a minimum-effort maximum-impact STA policy using the Hamiltonian function and costate dynamics for a linear CPS with sustainability-driven objective. Additionally, we have provided an algorithmic strategy using gradient ascent and descent methods to construct STAs that remain $\alpha$-stealthy over real-time monitoring. The algorithm is implemented for an illustrative example, and the simulation results show that the minimum-effort maximum impact STA significantly increases the sustainability cost of the administrator over a time horizon when compared to nominal operation. This impact is achieved while simultaneously evading detection. Our future work will focus on implementing our proposed framework to generate STA for real-world systems with application-specific sustainability metrics.






\bibliography{ref.bib,Ref_Troy,ref_sus}

\end{document}